 \newtheorem{thm}{Theorem}[section]
 \newtheorem{cor}[thm]{Corollary}
 \newtheorem{prop}[thm]{Corollary}
 \newtheorem{conj}[thm]{Conjecture}
 \newtheorem{lem}[thm]{Lemma}
 \newtheorem{defn}[thm]{Definition}
 \theoremstyle{remark}
 \numberwithin{equation}{section}
\DeclareMathOperator{\sign}{sign}
\begin{document}
\title[On the average sensitivity of the weighted sum function]
  { On the average sensitivity of the weighted sum function}

\author{Jiyou Li}
\address{Department of Mathematics, Shanghai Jiao Tong University, Shanghai, P.R. China}
\email{lijiyou@sjtu.edu.cn}



\thanks{This work is supported by the National Science Foundation of China
(11001170).}

\begin{abstract}

In this paper we obtain the bound on the average sensitivity of the
weighted sum function. This confirms a conjecture of Shparlinski. We
also compute the weights of the weighted sum functions and show that
they are almost balanced.


\end{abstract}

\maketitle \numberwithin{equation}{section}
\newtheorem{theorem}{Theorem}[section]
\newtheorem{lemma}[theorem]{Lemma}
\newtheorem{example}[theorem]{Example}
\allowdisplaybreaks

\section{Introduction}

A weighted sum function, also known as laced Boolean function, is
defined in terms of certain weighted sums in the residue ring modulo
a prime. Explicitly, it can be defined as follows \cite{SZ}. Let $n$
be a positive integer and $p$ is the least prime number that is no
less than $n$. For $X=(x_1,x_2,\dots,x_n)\in \mathbb{Z}_2^n$, we
define $s(X)$ by the least positive integer of $\sum_{k=1}^n kx_k$
modular $p$, i.e.,
$$s(X)\equiv \sum_{k=1}^n kx_k(\!\!\!\mod p), 1\leq s(X) \leq p.$$
 We define that
$$f(X)=\left\{
\begin{array}{ll}
  x_{s(X)}, \ \  1\leq s(X)\leq n;\\
  x_1, \ \  \hbox{otherwise.}\\
    \end{array}
    \right.
    $$

  This function was first studied by P.
Savick\'{y} and S. \v{Z}\'{a}k \cite{SZ} in their study of read-once
branching programs. It has also been used  for several more
complexity theory applications by M. Sauerhoff \cite{SD,S}. For
instance, in \cite{S} a certain modification of the same function
has been used to prove that quantum read-once branching programs are
exponentially more powerful than classical read-once branching
programs.

For a given input $X=(x_1,x_2,\dots,x_n)$, the sensitivity
$\sigma_{s}f$ on input $X$ is defined to be the number of bits such
that flipping one of them will change the value of the function.
Explicitly,
$$\sigma_{s,X}(f)=\sum_{i=1}^n\left |
 f(X)-f(X^{(i)})\right |,$$
 where $X^{(i)}=(x_1,\dots,x_{i-1},1-x_i,x_{i+1}\dots,x_n)$ denotes
the vector obtained from $X$ by flipping the i-th coordinate.
 The sensitivity $\sigma_{s}f$ of $f(X)$ is the maximum of $\sigma_{s,X}(f)$ on
  input $X$ over $\mathbb{Z}_2^n$.
 The average sensitivity
$\sigma_{av}(f)$ is defined to be the sensitivity average on all
inputs, i.e.,
$$\sigma_{av}(f)=2^{-n}\sum_{X\in
\mathbb{Z}_2^n}\sum_{i=1}^n\left |
 f(X)-f(X^{(i)})\right |.$$
   Sensitivity, and more generally, block sensitivity are important measures
    of complexity of Boolean functions.
It recently draws some attention. For instance,  Rubinstein and
Bernasconi showed large gaps between the average sensitivity and the
average block sensitivity \cite{R,Be}, Bernasconi, Damm and
Shparlinski gave the average sensitivity of testing square-free
numbers \cite{BD}, Boppana considered the average sensitivity of
bounded-depth circuits \cite{Bo} and Shi gave that the average
sensitivity as a lower bound of approximation polynomial degree, and
thus can serve as a lower bound of quantum query complexity
\cite{Shi}. For more details we refer to \cite{Bu}.

   In \cite{Sh} Shparlinski  studied the average sensitivity of laced Boolean function
   and gave a lower bound by obtaining a nontrivial bound on the Fourier coefficients of
the laced boolean function via exponential sums methods.
 He also gave two conjectures on the bounds of Fourier coefficients and the
average sensitivity of laced Boolean functions respectively.
Explicitly he conjectured that the average sensitivity of laced
Boolean functions on $n$ variables should be at least $(\frac
12+o(1))n$. He proved in the same paper that this value is greater
than $\gamma n$, where $\gamma\approx 0.135$ is a constant.

Recently an explicit formulas for the average sensitivity of laced
Boolean functions was given by D. Canright, etc . in \cite{CG} for
the case $p=n$ by using some counting formulas of the subset sums
over prime fields given by Li and Wan \cite{LW1}. Equivalently they
proved Shparlinski's conjecture for the case $p=n$. They also showed
further experimental evidence for the above conclusion on the
average sensitivity.

As we have known by the Prime Number Theorem, $p=n+o(n)$, see
\cite{B} for the best known result about gaps between prime numbers.
Thus, the gap between the present result and the expected result is
quite large. In this paper, we completely settled this problem. In
fact, we proved that the average sensitivity of laced Boolean
functions is $(\frac 12+o(1))n$.

We also compute the weights of the laced Boolean functions. Some
explicit formulas were given by D. Canright, S. Gangopadhyay, S.
Maitra and P. Stanica in \cite{CG} when $p-n\leq 3$ by the using the
same counting formulas of the subset sums over prime fields given in
\cite{LW1}. In this paper we extend their result for all the general
case.  We show that for a laced Boolean function $f$, the weight of
$f$ should be $2^{n-1}(1+o(1))$ and hence $f$ is asymptotically
balanced function.

This paper is organized as follows. In Section 2 we present a sieve
formula and we prove the main results in Section 3.

\section{A distinct coordinate sieving formula}

Our method is to evaluate a special exponential sum via a new
approach. The starting point is a new sieving formula discovered in
\cite{LW2}, which significantly improves the classical
inclusion-exclusion sieve in many interesting cases. We cite it here
without proof. For the details and some related applications we
refer to \cite{LW2,LW3}.

Let $D$ be a finite set, and let $D^k=D\times D \times \cdots \times
D$ be the Cartesian product of $k$ copies of $D$. Let $X$ be a
subset of $D^k$. 
Define
\begin{equation}\label{1.0}\overline{X}=\{(x_1,x_2,\cdots,x_k)\in X
\ | \ x_i\ne x_j, \forall i\ne j\}.
\end{equation}
Let $f(x_1,x_2,\dots,x_k)$ be a complex valued function
defined over $X$.  Many problems arising from coding theory,
additive number theory and number theory are reduced to evaluate the
summation
\begin{eqnarray}
F=\sum_{x \in \overline{X}}f(x_1,x_2,\dots,x_k).\label{1.00}
\end{eqnarray}
Note that if we let $f(x_1,x_2,\dots,x_k)\equiv1$, then $F$ is just
the number of elements in $\overline{X}$.

 Let $S_k$ be the symmetric group on $\{1,2,\cdots, k\}$.
Each permutation $\tau\in S_k$ factorizes uniquely (up to the
  order of the factors) as a product of disjoint cycles and  each
  fixed point is viewed as a trivial cycle of length $1$.
   Two permutations in $S_k$ are conjugate if and only if they
have the same type of cycle structure (up to the order).
 Let $C_k$ be the set of conjugacy  classes
 of $S_k$ and note that $|C_k|=p(k)$, the partition function. For a given $\tau\in S_k$,  let $l(\tau)$ be the number of
 cycles of $\tau$ including the trivial
  cycles. Then we can define the sign of $\tau$ to
  $\sign(\tau)=(-1)^{k-l(\tau)}.$
 For a given permutation $\tau=(i_1i_2\cdots i_{a_1})
  (j_1j_2\cdots j_{a_2})\cdots(l_1l_2\cdots l_{a_s})$
  with $1\leq a_i, 1 \leq i\leq s$, define
  \hskip 1.0cm
  \begin{eqnarray} \label{1.1}
     \ \ \ \ \ \ X_{\tau}=\left\{
(x_1,\dots,x_k)\in X,
 x_{i_1}=\cdots=x_{i_{a_1}},\cdots, x_{l_1}=\cdots=x_{l_{a_s}}
 \right\}.
   \end{eqnarray}
 Each element of $X_{\tau}$ is said to be of type $\tau$.
Thus $X_{\tau}$ is the set of all elements in $X$ of type $\tau$.
Similarly, for $\tau \in S_k$, we define
$$F_{\tau}=\sum_{x \in X_{\tau} }
f(x_1,x_2,\dots,x_k). $$

Now we can state our sieve formula. After that we will give one
corollary for the use of our proof. We remark that there are many
other interesting corollaries of this formula. For interested reader
we refer to \cite{LW2}.

\begin{thm} \label{thm1.0}We have
$$\label{1.5} F=\sum_{\tau\in
S_k}{\sign(\tau)F_{\tau}}.$$
 \end{thm}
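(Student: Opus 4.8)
The plan is to prove the identity $F=\sum_{\tau\in S_k}\sign(\tau)F_\tau$ by a direct interchange of summations, showing that on the right-hand side each ``good'' tuple (one with all coordinates distinct) is counted exactly once, while each ``bad'' tuple is counted zero times with the signed weights cancelling.

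First I would expand the right-hand side as
\begin{equation*}
\sum_{\tau\in S_k}\sign(\tau)F_\tau=\sum_{\tau\in S_k}\sign(\tau)\sum_{x\in X_\tau}f(x)=\sum_{x\in X}f(x)\sum_{\substack{\tau\in S_k\\ x\in X_\tau}}\sign(\tau),
\end{equation*}
which is legitimate since all sums are finite. So the theorem reduces to the purely combinatorial claim that for each fixed $x=(x_1,\dots,x_k)\in X$ one has $\sum_{\tau:\,x\in X_\tau}\sign(\tau)=\mathbbm 1[x\in\overline X]$. The next step is to understand the condition $x\in X_\tau$: writing $\tau$ as a product of disjoint cycles, $x\in X_\tau$ means precisely that $x$ is constant on the index set of every cycle of $\tau$. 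Hence the permutations $\tau$ with $x\in X_\tau$ are exactly those whose cycles each lie inside one of the ``level sets'' $P_1,\dots,P_m$ determined by the partition of $\{1,\dots,k\}$ into classes of equal coordinates of $x$; equivalently, $\tau$ ranges over $S_{P_1}\times\cdots\times S_{P_m}$, a product of symmetric groups on the blocks.

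Then I would use multiplicativity: since $\sign(\tau)=(-1)^{k-l(\tau)}$ and both $k$ and $l(\tau)$ are additive over the blocks, $\sign$ factors as a product over the blocks, so
\begin{equation*}
\sum_{\tau:\,x\in X_\tau}\sign(\tau)=\prod_{j=1}^m\ \sum_{\sigma\in S_{P_j}}\sign(\sigma)=\prod_{j=1}^m\ \sum_{\sigma\in S_{|P_j|}}(-1)^{|P_j|-l(\sigma)}.
\end{equation*}
The inner sum over $S_r$ equals $0$ for $r\ge 2$ and equals $1$ for $r=1$: indeed $\sum_{\sigma\in S_r}(-1)^{r-l(\sigma)}=(-1)^r\sum_{\sigma\in S_r}(-1)^{l(\sigma)}$, and $\sum_{\sigma\in S_r}(-1)^{l(\sigma)}$ is (up to sign) the value at $-1$ of the polynomial $\prod_{i=0}^{r-1}(t+i)=\sum_\sigma t^{l(\sigma)}$, which vanishes for $r\ge 2$ because of the factor $(t+1)$. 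Therefore the product over blocks is $0$ unless every block is a singleton — i.e. unless all the $x_i$ are pairwise distinct, i.e. $x\in\overline X$ — in which case it is $1$. Substituting back gives $\sum_{\tau}\sign(\tau)F_\tau=\sum_{x\in\overline X}f(x)=F$.

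The only genuine point requiring care — the ``hard part,'' such as it is — is the identification of $\{\tau:x\in X_\tau\}$ with the block product $\prod_j S_{P_j}$ and the bookkeeping that $\sign$ is multiplicative across that product; once that is set up, the vanishing of $\sum_{\sigma\in S_r}\sign(\sigma)$ for $r\ge2$ is the classical fact that a symmetric group of order $\ge2$ has equally many even and odd permutations. I would present the block-product identification explicitly (noting that a cycle of $\tau$ forces equality of the corresponding coordinates, and conversely any permutation respecting the blocks satisfies $x\in X_\tau$) to make the factorization airtight.
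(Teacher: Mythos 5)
Your proof is correct: the interchange of summation, the identification of $\{\tau : x\in X_\tau\}$ with the Young subgroup $S_{P_1}\times\cdots\times S_{P_m}$ determined by the level sets of $x$, the multiplicativity of $\sign(\tau)=(-1)^{k-l(\tau)}$ over blocks, and the vanishing of $\sum_{\sigma\in S_r}\sign(\sigma)$ for $r\ge 2$ together give exactly the claimed identity, with no symmetry hypothesis on $X$ needed. Note that the paper itself states this theorem without proof, deferring to the Li--Wan sieve paper \cite{LW2}; your argument is essentially the standard one given there, so there is nothing to flag beyond the fact that you have supplied the omitted details correctly.
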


%

Note that the symmetric group $S_k$ acts on $D^k$ naturally by
permuting coordinates. That is, for given $\tau\in S_k$ and
$x=(x_1,x_2,\dots,x_k)\in D^k$, we have
$$\tau\circ x=(x_{\tau(1)},x_{\tau(2)},\dots,x_{\tau(k)}).$$

 Before stating a useful corollary, we  first give two definitions.

\begin{defn} A subset $X$ in $D^k$ is said to be symmetric if for any $x\in X$ and
any $\tau\in S_k$, $\tau\circ x \in X $. 
\end{defn}

\begin{defn}Let $X\subseteq D^k$ and assume $X$ is symmetric.
A complex-valued function $f(x_1,x_2,\dots, x_k)$ defined over $X$
is called normal on $X$ if for any two $S_k$-conjugate elements
$\tau$ and $\tau'$ in $S_k$ (thus $\tau$ and $\tau'$ have the same
type), we have
$$\sum_{x\in X_{\tau}
} f(x_1,x_2,\dots,x_k)=\sum_{x\in X_{\tau'} }
f(x_1,x_2,\dots,x_k).$$

%
\end{defn}

\textbf{Remark:}
 If $f(x_1,x_2,\dots,x_k)$ is a symmetric function and $X$ is symmetric,
 then $f(x_1,x_2,\dots,x_k)$ must be normal
on $X$.

\begin{prop} \label{thm1.1}
 Let $C_k$ be the set of conjugacy classes
 of $S_k$.
 If $f$ is normal on $X$, then we have
$$F=\sum_{\tau \in C_k}(-1)^{k-l(\tau)}C(\tau)F_{\tau},$$
where $C(\tau)$ is the number of permutations conjugate to $\tau$.

%
\end{prop}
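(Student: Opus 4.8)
The idea is to start from Theorem \ref{thm1.0}, which expresses $F$ as a sum over all individual permutations $\tau \in S_k$ with weight $\sign(\tau) = (-1)^{k-l(\tau)}$, and then collapse this sum into a sum over conjugacy classes. First I would recall that $l(\tau)$ depends only on the cycle type of $\tau$, hence $\sign(\tau)$ is constant on each conjugacy class; this is already noted in the excerpt. Thus the only thing that could vary within a conjugacy class is the quantity $F_\tau = \sum_{x \in X_\tau} f(x_1,\dots,x_k)$. The normality hypothesis on $f$ (together with the standing assumption that $X$ is symmetric, which is what makes $X_\tau$ and $X_{\tau'}$ comparable at all) is exactly the statement that $F_\tau = F_{\tau'}$ whenever $\tau$ and $\tau'$ are $S_k$-conjugate. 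So $F_\tau$ is also constant on each conjugacy class.

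Given these two observations, the computation is essentially bookkeeping: pick a representative $\tau$ for each class in $C_k$, and write
$$
F = \sum_{\tau \in S_k} \sign(\tau) F_\tau
  = \sum_{[\tau] \in C_k} \ \sum_{\tau' \in [\tau]} (-1)^{k - l(\tau')} F_{\tau'}
  = \sum_{[\tau] \in C_k} (-1)^{k - l(\tau)} \Bigl( \sum_{\tau' \in [\tau]} F_{\tau'} \Bigr),
$$
and since every term $F_{\tau'}$ in the inner sum equals $F_\tau$ by normality, the inner sum is $C(\tau) F_\tau$, where $C(\tau) = |[\tau]|$ is the number of permutations conjugate to $\tau$. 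This yields $F = \sum_{\tau \in C_k} (-1)^{k - l(\tau)} C(\tau) F_\tau$, identifying $C_k$ with a chosen set of class representatives, which is the claimed formula.

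There is no serious obstacle here; the only point requiring a little care is making the identification of $C_k$ (a set of conjugacy classes) with a set of representatives precise, and checking that the value $(-1)^{k-l(\tau)} C(\tau) F_\tau$ is independent of the chosen representative — which follows because all three factors $(-1)^{k-l(\tau)}$, $C(\tau)$, and (via the normality of $f$) $F_\tau$ are class functions. One might also remark, for completeness, that the remark preceding the statement gives a convenient sufficient condition: if $f$ is symmetric and $X$ is symmetric, then $f$ is automatically normal, so the corollary applies in particular to the counting case $f \equiv 1$ that computes $|\overline{X}|$.
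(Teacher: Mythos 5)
Your proof is correct: grouping the terms of Theorem \ref{thm1.0} by conjugacy class and using that $\sign(\tau)=(-1)^{k-l(\tau)}$ and (by normality) $F_\tau$ are class functions is exactly the intended derivation. The paper itself omits the argument (deferring to \cite{LW2}), so there is nothing further to reconcile.
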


For the purpose of our proof, we will also need the following
combinatorial formula. For simplicity we omit the proof and it can
be found in \cite{LW2}.

\begin{lem} \label{lem2.6}
Let $N(c_1,c_2,\dots,c_k)$ be the number  of permutations in $S_k$
of type $(c_1,c_2,\dots,c_k)$, that is,
$$N(c_1,c_2,\dots,c_k)=\frac
{k!} {1^{c_1}c_1! 2^{c_2}c_2!\cdots k^{c_k}c_k!},$$
 and define the generating function
\begin{align*}C_k(t_1,t_2,\dots,t_k)= \sum_{\sum
ic_i=k} N(c_1,c_2,\dots,c_k)t_1^{c_1}t_2^{c_2}\cdots t_k^{c_k}.
 \end{align*}
If $t_1=t_2=\cdots=t_k=q$, then we have
\begin{align} \label{6.2}
C_k(q,q,\dots,q) &=\sum_{\sum
ic_i=k} N(c_1,c_2,\dots,c_k)q^{c_1}q^{c_2}\cdots q^{c_k}\nonumber \\
&=(q+k-1)_k
 \end{align}
\end{lem}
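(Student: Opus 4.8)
The plan is to recognize $C_k(q,q,\dots,q)$ as the cycle-counting polynomial of the symmetric group and to evaluate it with an exponential generating function. Specializing $t_1=\cdots=t_k=q$ collapses the monomial $t_1^{c_1}\cdots t_k^{c_k}$ to $q^{c_1+\cdots+c_k}$, and since a permutation of type $(c_1,\dots,c_k)$ has exactly $l(\tau)=\sum_i c_i$ cycles (trivial cycles included), the combinatorial meaning of $N(c_1,\dots,c_k)$ gives
$$C_k(q,q,\dots,q)=\sum_{\sum ic_i=k}N(c_1,\dots,c_k)\,q^{\sum_i c_i}=\sum_{\tau\in S_k}q^{l(\tau)}.$$
Hence it suffices to prove the classical identity $\sum_{\tau\in S_k}q^{l(\tau)}=(q+k-1)_k=q(q+1)\cdots(q+k-1)$, reading $(x)_k$ as the falling factorial $x(x-1)\cdots(x-k+1)$.

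First I would pass to the exponential generating function $G(t)=\sum_{k\geq0}\frac{t^k}{k!}\sum_{\tau\in S_k}q^{l(\tau)}$. A permutation is a disjoint union of cycles; on a fixed set of $j$ elements a cycle can be chosen in $(j-1)!$ ways and is given weight $q$, so the EGF of a single weighted cycle is $q\sum_{j\geq1}(j-1)!\,\frac{t^j}{j!}=q\sum_{j\geq1}\frac{t^j}{j}=-q\log(1-t)$. By the exponential formula for the "set of cycles" construction, $G(t)=\exp\!\bigl(-q\log(1-t)\bigr)=(1-t)^{-q}$.

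Next I would expand by the binomial series, $(1-t)^{-q}=\sum_{k\geq0}\binom{-q}{k}(-t)^k=\sum_{k\geq0}\binom{q+k-1}{k}t^k=\sum_{k\geq0}\frac{q(q+1)\cdots(q+k-1)}{k!}\,t^k$, and compare the coefficient of $t^k$ with that in $G(t)$ to obtain $\frac{1}{k!}\sum_{\tau\in S_k}q^{l(\tau)}=\frac{q(q+1)\cdots(q+k-1)}{k!}$, i.e. $C_k(q,q,\dots,q)=q(q+1)\cdots(q+k-1)=(q+k-1)(q+k-2)\cdots q=(q+k-1)_k$, as claimed. An essentially equivalent route avoids generating functions: inserting the element $k$ into a permutation of $\{1,\dots,k-1\}$ either as a new fixed point (weight $q$) or into one of the $k-1$ gaps of the existing cycles (weight $1$) yields the recursion $\sum_{\tau\in S_k}q^{l(\tau)}=(q+k-1)\sum_{\tau\in S_{k-1}}q^{l(\tau)}$, and the base case $\sum_{\tau\in S_1}q^{l(\tau)}=q$ together with induction finishes the proof.

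There is no genuine analytic obstacle here: both the exponential-formula step and the binomial expansion are standard. The only points requiring care are the bookkeeping identification of the collapsed sum with $\sum_{\tau\in S_k}q^{l(\tau)}$ through the meaning of $N(c_1,\dots,c_k)$, and fixing the convention for $(q+k-1)_k$ so it matches $q(q+1)\cdots(q+k-1)$. Accordingly the recursion argument is probably the shortest to write out in full, while the EGF computation is the most transparent conceptually.
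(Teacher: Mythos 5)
Your proof is correct. Note that the paper itself gives no proof of this lemma (it defers to \cite{LW2}), so there is nothing internal to compare against; your argument is the standard one and is complete. The identification $C_k(q,\dots,q)=\sum_{\tau\in S_k}q^{l(\tau)}$ follows immediately from the stated combinatorial meaning of $N(c_1,\dots,c_k)$ together with $l(\tau)=\sum_i c_i$, and both of your routes to $\sum_{\tau\in S_k}q^{l(\tau)}=q(q+1)\cdots(q+k-1)$ --- the exponential formula giving $(1-t)^{-q}$, and the insertion recursion $P_k(q)=(q+k-1)P_{k-1}(q)$ --- are sound. Your reading of $(q+k-1)_k$ as the falling factorial $(q+k-1)(q+k-2)\cdots q$ is consistent with the paper's earlier use of $(n)_k$ for $|\overline{X}|$, so the conventions match.
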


\section{Subset sums on smooth subsets}
Let $D\subseteq \mathbb{Z}_p$ be a nonempty subset of cardinality
$n$. An additive character $\chi: Z_p\rightarrow \mathbb{C}^*$
 is a homomorphism from $Z_p$ to the non-zero complex numbers $\mathbb{C}^*$.  We define the Fourier bias of $D$ to be
$$\Phi(D)=\max_{\chi\ne \chi_0}\left |\sum_{a\in D}\chi{(a)}\right |.$$
Suppose that we have known $\Phi(D)$.
Let $N$ be the number of solutions of the equation
$$x_1+x_2+\cdots+x_k=b, x_i\in D, x_i\neq x_j, i\ne j.$$
In the following theorem we will give an asymptotic bound on $N$
when $\Phi({D})$ is small compared to $n=|D|$.
 \begin{thm}Let $N$ be the number of solutions of the equation
$$x_1+x_2+\cdots+x_k=b, x_i\in D, x_i\neq x_j, i\ne j.$$
Then we have $$\frac {N} {k!}\geq \frac 1p {n\choose k}-{\Phi(D)+k-1
\choose k}.$$
\end{thm}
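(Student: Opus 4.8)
The plan is to express $N$ as an instance of the distinct‑coordinate sum $F$ from Theorem~\ref{thm1.0} and then apply the normal version, Corollary~\ref{thm1.1}. Take $D\subseteq\mathbb{Z}_p$, set $X=D^k$, and let $f(x_1,\dots,x_k)=\frac1p\sum_{\chi}\chi(x_1+\cdots+x_k-b)$, the indicator (via additive characters) of the event $x_1+\cdots+x_k=b$. Then $N=\sum_{x\in\overline{X}}f(x)=F$. Since $X=D^k$ is symmetric and $f$ is a symmetric function of the coordinates, $f$ is normal on $X$, so Corollary~\ref{thm1.1} gives $N=\sum_{\tau\in C_k}(-1)^{k-l(\tau)}C(\tau)F_\tau$. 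For a class $\tau$ of type $(c_1,\dots,c_k)$ (i.e.\ with $c_i$ cycles of length $i$), an element of $X_\tau$ is determined by choosing one value in $D$ per cycle, and the sum condition becomes $\sum_i i\,(\text{sum of the chosen values on length-}i\text{ cycles})=b$. Hence $F_\tau=\frac1p\sum_{\chi}\chi(-b)\prod_{i=1}^{k}\bigl(\sum_{a\in D}\chi(ia)\bigr)^{c_i}$.

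Next I would split off the trivial character $\chi_0$, which contributes $\frac1p\,n^{c_1+c_2+\cdots+c_k}=\frac1p\,n^{l(\tau)}$ to $F_\tau$. Summing the $\chi_0$–part over all classes with the weights $(-1)^{k-l(\tau)}C(\tau)$ and invoking Lemma~\ref{lem2.6} with $q=n$ (more precisely its signed analogue, which is how $(q+k-1)_k=q(q+1)\cdots(q+k-1)$ turns into $\binom{n}{k}k!$ after the sign $(-1)^{k-l(\tau)}$ is absorbed — this is the standard identity $\sum_\tau(-1)^{k-l(\tau)}C(\tau)q^{l(\tau)} = q(q-1)\cdots(q-k+1)$) yields the main term $\frac{k!}{p}\binom{n}{k}$. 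For the nontrivial characters, bound $\bigl|\sum_{a\in D}\chi(ia)\bigr|\le\Phi(D)$ whenever $\gcd(i,p)=1$; since $p$ is prime and $1\le i\le k<p$ this always holds (one should note $\chi(ia)=\chi'(a)$ for the nontrivial character $\chi'(\cdot)=\chi(i\cdot)$, so the bound by $\Phi(D)$ is legitimate). Thus $|F_\tau-\tfrac1p n^{l(\tau)}|\le\frac{p-1}{p}\,\Phi(D)^{l(\tau)}\le\Phi(D)^{l(\tau)}$, and summing against $C(\tau)$ with all signs made positive gives an error bounded by $\sum_\tau C(\tau)\Phi(D)^{l(\tau)}=\Phi(D)(\Phi(D)+1)\cdots(\Phi(D)+k-1)=\binom{\Phi(D)+k-1}{k}k!$, again by Lemma~\ref{lem2.6} with $q=\Phi(D)$.

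Putting the two pieces together, $N\ge \frac{k!}{p}\binom{n}{k}-\binom{\Phi(D)+k-1}{k}k!$, which after dividing by $k!$ is exactly the claimed inequality $\frac{N}{k!}\ge\frac1p\binom{n}{k}-\binom{\Phi(D)+k-1}{k}$.

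The main obstacle I anticipate is bookkeeping the signs correctly in the character-free main term: one must check that the alternating sum $\sum_{\tau\in C_k}(-1)^{k-l(\tau)}C(\tau)\,q^{l(\tau)}$ really equals the falling factorial $(q)_k=q(q-1)\cdots(q-k+1)$ and hence evaluates to $k!\binom{n}{k}$ at $q=n$, while the \emph{unsigned} sum $\sum_{\tau}C(\tau)q^{l(\tau)}$ equals the rising factorial $(q+k-1)_k$ of Lemma~\ref{lem2.6} and controls the error at $q=\Phi(D)$. The sign discrepancy between these two — falling factorial with the signs, rising factorial without — is the subtle point; everything else is a direct character-sum estimate using only $|\sum_{a\in D}\chi'(a)|\le\Phi(D)$ for $\chi'\ne\chi_0$ together with the observation that multiplication by a unit $i$ permutes the nontrivial characters.
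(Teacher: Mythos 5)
Your proof is correct and follows essentially the same route as the paper: additive characters to detect the equation, the distinct-coordinate sieve (Corollary~\ref{thm1.1}) to handle the condition $x_i\ne x_j$, and Lemma~\ref{lem2.6} to sum the cycle-type bounds $\Phi(D)^{l(\tau)}$ into $(\Phi(D)+k-1)_k=k!\binom{\Phi(D)+k-1}{k}$. The only difference is the order of operations: the paper splits off the trivial character first, so its main term is simply $|\overline{X}|/p=(n)_k/p$ and no signed identity is needed, whereas your sieve-first arrangement requires the falling-factorial identity $\sum_{\tau}\sign(\tau)C(\tau)q^{l(\tau)}=(q)_k$ --- which you correctly flag as the subtle point and which in any case follows from applying Theorem~\ref{thm1.0} to the constant function $1$ on $D^k$.
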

\begin{proof}Let $X=D^k=D\times D \times \cdots \times D$ be the
Cartesian product of $k$ copies of $D$.
 Let $  \overline{X} =\left\{ (x_1,x_2,\dots,x_{k} )\in D^k \mid
 x_i\not=x_j,~ \forall i\ne j\} \right\}.$ It is clear that $|X|=n^k$ and
$|\overline{X}|=(n)_k$. Note that we have defined that for a
permutation $\tau\in S_k$, $X_{\tau}$ consists of the elements $x\in
X$ of type $\tau$.

Let  $G$ be the group of additive characters of $\mathbb{Z}_p$ and
$\chi_0$ be the trivial character. Then we deduce that
\begin{align*}
N&=\frac 1 {p} \sum_{(x_1, x_2,\dots x_k) \in \overline{X}}
\sum_{\chi\in G}\chi(x_1+x_2+\cdots +x_k-b)\\
&=\frac 1 {p}\sum_{\chi\in G} \sum_{(x_1, x_2,\dots x_k) \in
\overline{X}}\chi(x_1+x_2+\cdots +x_k-b)\\
&=\frac {(n)_k}{p}+\frac 1 {p} \sum_{\chi\ne \chi_0}\sum_{(x_1,
x_2,\cdots x_k) \in\overline{X}}\chi(x_1)\chi(x_2)\cdots \chi(x_k)\chi^{-1}(b)\\
&=\frac {(n)_k}{p}+\frac 1 {p} \sum_{\chi\ne
\chi_0}\chi^{-1}(b)\sum_{(x_1,x_2,\dots x_k) \in\overline{X}}\prod_{i=1}^{k} \chi(x_i).\\
\end{align*}
For given $\chi\ne \chi_0$, let $f_{\chi}(x)=
f_{\chi}(x_1,x_2,\dots,x_{k})= \prod_{i=1}^{k}\chi(x_i),$ and for
given $\tau$ let
$$F_{\tau}(\chi)=\sum_{x\in X_{\tau}}f_{\chi}(x)=\sum_{x \in X_{\tau}}\prod_{i=1}^{k} \chi(x_i).$$

Obviously $X$ is symmetric. It is also easy to check that
$f_{\chi}(x_1,x_2,\dots,x_{k})$ is normal on $X$. Thus by applying
Corollary \ref{thm1.1}, we deduce
  \begin{align*}
 N&=\frac {(n)_k}{p}+\frac 1 {p} \sum_{\chi\ne \chi_0}\chi^{-1}(b) \sum_{\tau\in
C_{k}}sign(\tau)C(\tau) F_{\tau}(\chi)\\
  \end{align*}
 where $C_{k}$ is the set of conjugacy classes
 of $S_{k}$, $C(\tau)$ is the number of permutations conjugate to $\tau$, and
 $F_{\tau}(\chi)=\sum_{x \in X_{\tau}}\prod_{i=1}^{k}\chi(x_i).$

For given $\tau\in C_{k}$, assume $\tau$ is of type
$(c_1,c_2,\dots,c_{k})$, where $c_i$ is the number of $i$-cycles in
$\tau$ for $1 \leq i\leq k$. Note that $\sum_{i=1}^{k} ic_i=k$ and
thus we deduce
  \begin{align*}
F_{\tau}(\chi)&=(\sum_{a\in D}\chi(a))^{c_1} (\sum_{a\in
D}\chi^2(a))^{c_2} \cdots
(\sum_{a\in D }\chi^{k}(a))^{c_{k}}\nonumber\\
&=\prod_{i=1}^{k}(\sum_{a\in D}\chi^i(a))^{c_i}\nonumber.\\
\end{align*}
By the definition of $\Phi(D)$ we have $F_{\tau}(\chi) \leq
(\Phi(D))^{\sum_{i=1}^{k}c_i}$ and thus
 \begin{align*}
 N&\geq\frac {(n)_k}{p}-\frac 1 {p} \sum_{\chi\ne \chi_0}
 \sum_{\tau\in C_{k}}C(\tau)(\Phi(D))^{\sum_{i=1}^{k}c_i}\\
 &=\frac {(n)_k}{p}-\frac {p-1} {p}\sum_{\sum ic_i=k} \frac
{k!} {1^{c_1}c_1! 2^{c_2}c_2!\cdots k^{c_{k}}c_{k}!}
(\Phi(D))^{\sum_{i=1}^{k}c_i}\\
&=\frac {(n)_k}{p}-\frac {p-1} {p}(\Phi(D)+k-1)_k. \qedhere
\end{align*}
The last equality is from Lemma \ref{lem2.6} and the proof is
complete.
\end{proof}

  If $n=p-c$, where $c$ is a fixed constant, then one checks that $\Phi(D)\leq c$
   and thus we get a clean and better bound, which was first found by \cite{LW1} by using
  elementary counting method.

 \begin{cor}
   If $n=p-c$, where $c$ is a fixed constant, then we have
 $$\frac {N} {k!}\geq \frac 1p {p-c \choose k}-{c+k-1 \choose k}.$$
 \end{cor}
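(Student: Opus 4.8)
The plan is to deduce this directly from the preceding Theorem, the only extra input being that the hypothesis $n=p-c$ forces $\Phi(D)\le c$. The argument then reduces to two short observations: a character-sum estimate on the complement of $D$, and the monotonicity of the rising factorial $(x+k-1)_k$ in the variable $x$.

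First I would bound the Fourier bias. Put $E=\mathbb{Z}_p\setminus D$, so that $|E|=p-n=c$. For any nontrivial additive character $\chi$ of $\mathbb{Z}_p$ one has $\sum_{a\in\mathbb{Z}_p}\chi(a)=0$, and hence
$$\Bigl|\sum_{a\in D}\chi(a)\Bigr|=\Bigl|\sum_{a\in E}\chi(a)\Bigr|\le\sum_{a\in E}|\chi(a)|=|E|=c,$$
since $|\chi(a)|=1$ for every $a$. Taking the maximum over $\chi\ne\chi_0$ gives $\Phi(D)\le c$.

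Next I would substitute this into the conclusion of the Theorem, namely $\frac{N}{k!}\ge\frac1p\binom{n}{k}-\binom{\Phi(D)+k-1}{k}$. Writing $n=p-c$ turns the first term into $\frac1p\binom{p-c}{k}$. For the second term, observe that for real $x\ge0$ the quantity $\binom{x+k-1}{k}=\frac{1}{k!}\,x(x+1)\cdots(x+k-1)$ is a product of nonnegative nondecreasing factors, hence nondecreasing in $x$; since $0\le\Phi(D)\le c$ this gives $\binom{\Phi(D)+k-1}{k}\le\binom{c+k-1}{k}$. Combining the two estimates yields the asserted inequality.

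There is no genuine obstacle here; the one point worth a word of care is that $\Phi(D)$ is a priori merely a nonnegative real number rather than an integer, so $\binom{\Phi(D)+k-1}{k}$ must be interpreted as the normalized rising factorial $(\Phi(D)+k-1)_k/k!$, and it is exactly for that reading that the monotonicity step above applies.
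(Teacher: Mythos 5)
Your proposal is correct and follows exactly the route the paper intends: the paper's entire justification is the remark that "one checks that $\Phi(D)\leq c$" and then substitutes into the Theorem, which is precisely your complement-of-$D$ character-sum argument followed by monotonicity of $(x+k-1)_k/k!$ in $x$. You have simply made explicit the two small verifications the paper leaves to the reader.
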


Similarly we have
 \begin{cor}
 If $n=p-o(p)$, then $\Phi(D)\leq o(p)$ and thus we have
 $$\frac {N} {k!}\geq \frac 1p {p-o(p) \choose k}-{o(p)+k-1
\choose k}.$$
 \end{cor}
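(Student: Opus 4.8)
The plan is to separate the statement into two independent pieces: an elementary upper bound $\Phi(D)\le p-n$ on the Fourier bias, valid for every $D\subseteq\mathbb{Z}_p$ with $|D|=n$, and then a routine substitution of this bound into the Theorem proved just above.

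For the Fourier bias estimate I would pass to the complement $D^{c}=\mathbb{Z}_p\setminus D$, which satisfies $|D^{c}|=p-n$. For any nontrivial additive character $\chi$ one has the orthogonality relation $\sum_{a\in\mathbb{Z}_p}\chi(a)=0$, hence $\sum_{a\in D}\chi(a)=-\sum_{a\in D^{c}}\chi(a)$; since $|\chi(a)|=1$, the triangle inequality gives $\bigl|\sum_{a\in D}\chi(a)\bigr|\le|D^{c}|=p-n$. Taking the maximum over $\chi\ne\chi_0$ yields $\Phi(D)\le p-n$, and under the hypothesis $n=p-o(p)$ this is exactly $\Phi(D)\le o(p)$, the first assertion.

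It then remains to feed $\Phi(D)\le o(p)$ and $n=p-o(p)$ into the bound $\frac{N}{k!}\ge\frac1p\binom nk-\binom{\Phi(D)+k-1}{k}$ supplied by the Theorem. Since $\binom{x+k-1}{k}=\frac1{k!}\prod_{i=0}^{k-1}(x+i)$ is increasing in $x$ on $[0,\infty)$ and $\Phi(D)\le p-n$, enlarging the argument of the subtracted binomial from $\Phi(D)$ to $p-n=o(p)$, and rewriting $n$ as $p-o(p)$ in the first term, only weakens the inequality; this produces $\frac{N}{k!}\ge\frac1p\binom{p-o(p)}{k}-\binom{o(p)+k-1}{k}$. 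I do not expect any genuine obstacle here: the only point requiring a word of care is that the two occurrences of $o(p)$ refer to the same quantity $p-n$, so the asymptotic bookkeeping is consistent, and that the monotonicity step is legitimate because $\Phi(D)$ is a nonnegative real and the right-hand binomial is being enlarged rather than diminished.
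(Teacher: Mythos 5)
Your proposal is correct and matches the route the paper intends: the paper's only justification is the remark that "one checks that $\Phi(D)\leq c$" when $n=p-c$, which is precisely your complement-plus-orthogonality argument giving $\Phi(D)\le p-n$, followed by substitution into the Theorem. Your write-up simply supplies the details (including the monotonicity of $x\mapsto\binom{x+k-1}{k}$) that the paper leaves implicit.
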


  \begin{cor}\label{cor3.4}
 Let $n=p-o(p)$. Let $N(b, D)$ be the number of subsets in $D$
 which sums to $b$. Then we have
 $$N(b, D)=\frac {2^n} p(1+o(1)).$$
 \end{cor}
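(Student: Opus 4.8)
The plan is to sum the asymptotic lower bound from the preceding corollaries over all admissible cardinalities $k$, obtaining a lower bound on $N(b,D)$, and then to observe that the trivial averaging argument over the $p$ possible values of $b$ forces this lower bound to be tight up to the $o(1)$ factor. Concretely, write $N(b,D)=\sum_{k=0}^{n} N_k(b,D)$, where $N_k(b,D)$ counts the $k$-element subsets of $D$ summing to $b$ in $\mathbb{Z}_p$. By the corollary for $n=p-o(p)$, for each $k$ we have
$$N_k(b,D)\geq \frac 1p\binom{p-o(p)}{k}-\binom{o(p)+k-1}{k},$$
so that
$$N(b,D)\geq \frac 1p\sum_{k=0}^{n}\binom{n}{k}-\sum_{k=0}^{n}\binom{o(p)+k-1}{k}=\frac{2^n}{p}-\sum_{k=0}^{n}\binom{o(p)+k-1}{k}.$$
The first main step is therefore to show that the error term $\sum_{k=0}^{n}\binom{o(p)+k-1}{k}$ is $o(2^n/p)$. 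Here one must be slightly careful, because $\binom{o(p)+k-1}{k}=\binom{o(p)+k-1}{o(p)-1}$ can be as large as roughly $k^{o(p)}$, and summing over $k\le n$ gives something like $n^{o(p)}=2^{o(p)\log n}$; since $o(p)\log n = o(n)$ is not automatic, one should invoke the quantitative hypothesis implicit in the paper — that $p-n$ is genuinely small (e.g. $p-n=O(n^{0.6})$ by the cited result on prime gaps \cite{B}, or at the very least $p-n=o(n/\log n)$) — so that $2^{o(p)\log n}=2^{o(n)}=o(2^n/p)$. I expect this bounding of the error term to be the main obstacle: it is the only place where the strength of the prime-gap input actually enters, and it is where the phrase ``$n=p-o(p)$'' in the hypothesis has to be read as the more precise statement that $\Phi(D)\log n=o(n)$.

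Once the lower bound $N(b,D)\geq \frac{2^n}{p}(1-o(1))$ is established uniformly in $b$, the matching upper bound is immediate from a counting identity rather than from character sums: summing over all residues,
$$\sum_{b\in\mathbb{Z}_p} N(b,D)=2^n,$$
since every one of the $2^n$ subsets of $D$ contributes to exactly one value of $b$. If some particular $b_0$ had $N(b_0,D)\geq \frac{2^n}{p}(1+\varepsilon)$ for a fixed $\varepsilon>0$, then the remaining $p-1$ values would together account for at most $2^n-\frac{2^n}{p}(1+\varepsilon)$, forcing their average to be at most $\frac{2^n}{p}(1-\frac{\varepsilon}{p-1})$; but this is compatible with the uniform lower bound only if $\varepsilon=o(1)$. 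Hence $N(b,D)=\frac{2^n}{p}(1+o(1))$ for every $b$, which is the claim.

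I would present the argument in exactly this order: (i) decompose $N(b,D)$ by subset size and apply the previous corollary termwise; (ii) sum the main terms via $\sum_k\binom{n}{k}=2^n$; (iii) bound the error term using the quantitative prime-gap estimate, noting this is the crux; (iv) close with the averaging identity $\sum_b N(b,D)=2^n$ to upgrade the one-sided bound to an asymptotic equality. The only subtlety worth flagging in the write-up is that $k$ ranges up to $n$, so the binomial $\binom{o(p)+k-1}{k}$ must be controlled in the regime $k\approx n$ as well as for small $k$; splitting the sum at $k=n/2$ and using symmetry of $\binom{n}{k}$ together with monotonicity of the error binomials handles this cleanly.
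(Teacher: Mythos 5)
Your argument is correct, and for the lower bound it is exactly the intended route: the paper states this corollary with no proof at all, and the natural derivation is to decompose $N(b,D)=\sum_{k=0}^{n}N_k(b,D)$ and sum the bound of the preceding corollary termwise, with main term $\frac1p\sum_k\binom nk=\frac{2^n}{p}$. Where you genuinely diverge is the upper bound. The proof of Theorem 3.2 actually controls the full character-sum error $\bigl|N-\frac{(n)_k}{p}\bigr|$ in absolute value, so the two-sided estimate $N_k(b,D)\le\frac1p\binom nk+\binom{\Phi(D)+k-1}{k}$ comes for free and summing it over $k$ gives the upper bound directly; your averaging identity $\sum_{b}N(b,D)=2^n$ reaches the same conclusion without reopening the exponential-sum computation, at the cost of an extra factor of $p$ in the error (harmless, since the error is $2^{o(n)}$ against a main term of size $2^{n-O(\log p)}$). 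One correction, though: the step you single out as the crux is easier than you make it. You bound $\binom{\Phi(D)+k-1}{k}$ crudely by about $n^{\Phi(D)}$ and therefore feel compelled to import the quantitative prime-gap estimate to get $\Phi(D)\log n=o(n)$. But writing $m=\Phi(D)\le p-n=o(n)$, the standard bound $\binom{m+k-1}{k}\le\binom{m+n-1}{m-1}\le\bigl(e(m+n)/(m-1)\bigr)^{m-1}=\exp\bigl(O(m\log(n/m))\bigr)$ already gives $\exp(o(n))$, because $m\log(n/m)=o(n)$ whenever $m=o(n)$ (set $m=n/\omega$ with $\omega\to\infty$ and note $\log\omega/\omega\to0$). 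So the hypothesis $n=p-o(p)$ as literally stated suffices, and no prime-gap input is needed for this corollary; it only enters later when the corollary is applied to $D=\{1,\dots,n\}\subseteq\mathbb{Z}_p$ to guarantee $p-n=o(p)$ in the first place.
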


We say a subset $D\subseteq A$ is smooth if $\Phi(D)=O(\sqrt{n\log
|A|})$, that is, for every nontrivial additive character $\chi$,
$|\sum_{a\in D}\chi{(a)}|=O(\sqrt{n\log |A|})$.
    \begin{cor}
    Let $D\subseteq\mathbb{Z}_p$ and $\epsilon$ be a positive
 constant.  If $|D|=\log^{1+\epsilon} p$ and $D$ is smooth, then there are
 two constants $c_1$ and $c_2$ such that if
$ c_1\frac {\log p} {\log\log{p}}\leq k \leq c_2 n $, then each
element $\mathbb{Z}_p$ can be written to be a k-subset sum in $D$.
    \end{cor}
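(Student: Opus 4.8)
The plan is to invoke the theorem of this section, which gives $N/k!\ge\frac1p\binom nk-\binom{\Phi(D)+k-1}{k}$, and to show that this lower bound is strictly positive for every $b$ and every $k$ in the stated range. Since $N/k!$ equals the number $N(b,D)$ of $k$-element subsets of $D$ with sum $b$, positivity of this quantity is exactly the statement that $b$ is a $k$-subset sum in $D$. Write $n=|D|=\log^{1+\epsilon}p$ and $\Phi=\Phi(D)$; as the bound is independent of $b$, the whole corollary reduces to the single inequality
$$\frac1p\binom nk>\binom{\Phi+k-1}{k}.$$
First I would pass to a cruder but more transparent version of this, using $\binom nk\ge(n/k)^k$ together with $\binom{\Phi+k-1}{k}=(\Phi+k-1)_k/k!\le(\Phi+k-1)^k/k!\le\bigl(e(\Phi+k-1)/k\bigr)^k$, the last step by $k!\ge(k/e)^k$. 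The powers of $k$ cancel, so it suffices to establish $\bigl(n/(e(\Phi+k-1))\bigr)^k>p$, that is,
$$h(k):=k\log\frac{n}{e(\Phi+k-1)}>\log p .$$

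This is where smoothness enters. Since $D\subseteq\mathbb{Z}_p$, we have $\Phi=O(\sqrt{n\log p})$, and because $n=\log^{1+\epsilon}p$ this is $O(\log^{1+\epsilon/2}p)$, which is $o(n)$ and moreover is still far larger than $k$ when $k$ has size only $\log p/\log\log p$. Setting $k_1=c_1\log p/\log\log p$ and $k_2=c_2 n$, I would split $[k_1,k_2]$ at $k^{\ast}=\log^{1+\epsilon/2}p$. On $[k_1,k^{\ast}]$ we have $\Phi+k-1=O(\log^{1+\epsilon/2}p)$, so for $p$ large the ratio $n/(e(\Phi+k-1))$ is at least $\log^{\epsilon/4}p$; hence $h(k)\ge k_1\cdot\tfrac{\epsilon}{4}\log\log p=\tfrac{c_1\epsilon}{4}\log p$, which exceeds $\log p$ provided $c_1>4/\epsilon$. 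On $[k^{\ast},k_2]$, taking $c_2$ to be a small absolute constant, say $c_2=1/(4e)$, forces $\Phi+k-1\le 2c_2 n\le n/(2e)$ for $p$ large, whence $n/(e(\Phi+k-1))\ge 2$ and $h(k)\ge k^{\ast}\log 2=\log 2\cdot\log^{1+\epsilon/2}p$, again exceeding $\log p$ for $p$ large. Since the two subintervals cover $[k_1,k_2]$, this yields $h(k)>\log p$ throughout, and the corollary follows. (Equivalently, a short computation shows that $h'$ is decreasing, so $h$ is concave and it is enough to check the two endpoints $k_1$ and $k_2$.)

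The one delicate point is the lower threshold. For $k$ as small as $\log p/\log\log p$ the ratio $n/(e(\Phi+k-1))$ is not merely bounded away from $1$ but is a fixed positive power of $\log p$, and it is precisely this surplus that offsets the smallness of $k$ and accounts for the factor $1/\log\log p$ in $c_1\log p/\log\log p$. Fixing the dependence $c_1\asymp 1/\epsilon$ correctly, and verifying that the error term $\binom{\Phi+k-1}{k}$ stays negligible uniformly over the whole range and not merely near its upper end, is where the bookkeeping lies; the rest is routine, with "suitable $c_1,c_2$" meaning "for all sufficiently large $p$".
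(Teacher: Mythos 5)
The paper gives no proof here (it is ``left to the reader''), so there is nothing to compare against; your argument is the evidently intended one -- apply the theorem of this section and show the main term $\frac1p\binom{n}{k}$ dominates the error term $\binom{\Phi(D)+k-1}{k}$ uniformly in $b$ -- and it is correct. The splitting of $[c_1\log p/\log\log p,\,c_2n]$ at $\log^{1+\epsilon/2}p$ (the order of $\Phi(D)$ under the smoothness hypothesis with $|A|=p$) correctly isolates the delicate lower end, where the gain of a factor $\log^{\epsilon/2}p$ in the ratio $n/(e(\Phi+k-1))$ is exactly what compensates for $k$ being as small as $\log p/\log\log p$, and your constants $c_1>4/\epsilon$, $c_2=1/(4e)$ check out.
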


    \begin{proof}
    The proof is left to the reader.
    \end{proof}


\section{Average sensitivity}
The average sensitivity $\sigma_{av}(f)$ of an n-variate Boolean
function $f(x_1,x_2,\dots,x_n)$ is defined as
$$\sigma_{av}(f)=2^{-n}\sum_{X\in \mathbb{Z}_2^n}\sum_{i=1}^n\left |
 f(X)-f(X^{(i)})\right |,$$
where $X^{(i)}$ is the vector obtained from $X$ by flipping its
$i$th coordinate. In \cite{Sh} the author asked the following
question.

For given $X=(x_1,x_2,\dots,x_n)\in \mathbb{Z}_2^n$, we define
$s(X)$ by the least positive integer of $\sum_{k=1}^n kx_k$ mod $p$,
i.e.,
$$s(X)\equiv \sum_{k=1}^n kx_k(\!\!\!\!\!\!\mod p),   1\leq s(X) \leq p.$$

Following \cite{SZ} we define the so called laced Boolean function
$$f(X)=\left\{
\begin{array}{ll}
  x_{s(X)}, \ \  1\leq s(X)\leq n;\\
  x_1, \ \  \hbox{otherwise.}\\
    \end{array}
    \right.
    $$

We first compute the weight of $f(X)$. We have the following
theorem, which significantly improve the results given by \cite{CG}.
\begin{thm}Let $f(X)$ be defined as above. Then we have
$$wt(f)={2^{n-1}}(1+o(1)).$$
In other words, $f(X)$ is an asymptotically balanced function..
\end{thm}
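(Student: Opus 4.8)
The plan is to count $wt(f)=\#\{X\in\mathbb{Z}_2^n:f(X)=1\}$ directly and to reduce it to subset-sum counts in $\mathbb{Z}_p$, after which Corollary \ref{cor3.4} does the work. Identify $X$ with the subset $\{k:x_k=1\}\subseteq\{1,\dots,n\}$, so that $s(X)$ is the subset sum $\sum_{k:x_k=1}k$ reduced into $\{1,\dots,p\}$; by definition $f(X)=1$ exactly when either $1\le s(X)\le n$ and $x_{s(X)}=1$, or $n<s(X)\le p$ and $x_1=1$. Splitting the first alternative according to the value $j=s(X)$ gives
\[
wt(f)=\sum_{j=1}^{n}\#\{X:s(X)=j,\ x_j=1\}+\#\{X:n<s(X)\le p,\ x_1=1\}.
\]

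First I would turn each term into a subset-sum count. Fixing $x_j=1$, the condition $s(X)=j$ reads $j+\sum_{k\ne j}kx_k\equiv j\pmod p$, that is $\sum_{k\ne j}kx_k\equiv 0\pmod p$, so $\#\{X:s(X)=j,\ x_j=1\}=N(0,D_j)$, where $D_j=\{1,\dots,n\}\setminus\{j\}$ and $N(b,D_j)$ denotes the number of subsets of $D_j$ summing to $b$ in $\mathbb{Z}_p$. Similarly, fixing $x_1=1$ turns $s(X)=v$ into $\sum_{k=2}^{n}kx_k\equiv v-1\pmod p$, so the second term equals $\sum_{v=n+1}^{p}N(v-1,D')$ with $D'=\{2,\dots,n\}$. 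Each of $D_j$ and $D'$ has cardinality $n-1=p-(p-n+1)$, and by the Prime Number Theorem $p-n+1=o(p)$, so these sets satisfy the hypothesis of Corollary \ref{cor3.4} with a deficiency $p-n+1$ independent of $j$; hence that corollary gives, uniformly in the set ($D_j$ or $D'$) and in the residue $b$,
\[
N(b,D_j)=N(b,D')=\frac{2^{n-1}}{p}\,(1+o(1)).
\]

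Finally I would add the two pieces. The first sum has $n$ terms, each $\frac{2^{n-1}}{p}(1+o(1))$, hence equals $\frac{n\,2^{n-1}}{p}(1+o(1))=2^{n-1}(1+o(1))$ since $n/p=1+o(1)$; the second sum has $p-n$ terms, each $\frac{2^{n-1}}{p}(1+o(1))$, hence equals $\frac{(p-n)\,2^{n-1}}{p}(1+o(1))=o(2^{n-1})$ since $p-n=o(p)$. Adding them gives $wt(f)=2^{n-1}(1+o(1))$, so $f$ is asymptotically balanced. The one delicate point is the uniformity of the $o(1)$ produced by Corollary \ref{cor3.4}: because we sum $n$ (respectively $p-n$) copies of it, one must check — from the explicit error term ${\Phi(D)+k-1\choose k}$ in the bound underlying that corollary — that it depends only on the bias $\Phi(D)$, which here satisfies $\Phi(D)\le p-|D|=o(p)$ uniformly, and not on $b$. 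The secondary point worth stating is that the ``otherwise'' branch $s(X)>n$ must be shown negligible, and that is precisely where the hypothesis $p-n=o(p)$ (the Prime Number Theorem) enters.
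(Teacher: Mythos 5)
Your proposal is correct and follows essentially the same route as the paper: split $wt(f)$ according to the value of $s(X)$, convert each piece into a subset-sum count $N(0,D\setminus\{j\})$ or $N(v-1,D\setminus\{1\})$, and invoke Corollary \ref{cor3.4} together with $p=n+o(n)$. Your added remarks on the uniformity of the $o(1)$ in $b$ and in the removed element are a welcome (and correct) bit of extra care that the paper leaves implicit.
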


\begin{proof}
Let $A=\{0,n+1,n+2,\dots,p-1\}$ and $D=Z_p\setminus A$. By applying
Corollary \ref{cor3.4} we have
 \begin{align*}
 wt(f)=\sum_{X\in Z_2^n}f(X)&=\sum_{s=1}^n\sum_{X\in Z_2^n, s(X)=s, x_s=1}1+
\sum_{s=n+1}^p\sum_{X\in Z_2^n, s(X)=s, x_1=1}1\\
&=\sum_{s=1}^n N(0,D\backslash \{s\})+\sum_{s=n+1}^pN(s-1,D\backslash \{1\})\\
&=\sum_{s=1}^p \frac 1 p  2^{n-1}(1+o(1))\\
 &={2^{n-1}}(1+o(1)) \qedhere
 \end{align*}
\end{proof}

In \cite{Sh} Shparlinski studied $\sigma_{av}(f)$ and raised the
following conjecture:
\begin{conj}
Is it true that for the function given by (1) we have
$$\sigma_{av}(f)\geq \left( \frac 12 +o(1)\right)n?$$
\end{conj}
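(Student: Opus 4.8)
The plan is to express the average sensitivity as a sum, over inputs $X$ and coordinates $i$, of an indicator that flipping the $i$th bit changes $f$, and then to reorganize this count by conditioning on the value $s = s(X)$. The key observation is that for a ``generic'' input — one whose weighted sum $s(X)$ lands in $\{1,2,\dots,n\}$ and, after flipping coordinate $i$, still lands in $\{1,\dots,n\}$ — the sensitive coordinates are exactly $i=s(X)$ (flipping it directly changes the output bit $x_{s(X)}$) together with those $i$ for which flipping $x_i$ moves $s(X)$ from one position $s$ to another position $s'$ with $x_s \neq x_{s'}$. So for a fixed $X$ with $s(X)=s$, flipping coordinate $i$ changes $s(X)$ to $s' \equiv s + (1-2x_i)i \pmod p$, and the flip is sensitive precisely when $x_{s'} \neq x_s$ (with the appropriate care when $s$ or $s'$ exceeds $n$, where the output is $x_1$ instead).

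The core of the argument is then a counting step: for each ordered pair of positions $(s,s')$ with $1\le s, s'\le n$, $s\ne s'$, and each choice of bit value at those two positions, I would count the number of $X\in\mathbb{Z}_2^n$ with $s(X)=s$, $x_s$ and $x_{s'}$ prescribed, and with $i := \pm(s'-s) \bmod p$ being a legitimate coordinate that, when flipped, carries $s$ to $s'$. This count is a subset-sum count over a set $D\subseteq \mathbb{Z}_p$ of size essentially $n$ (namely $D = \mathbb{Z}_p\setminus\{0, n+1,\dots,p-1\}$ with two or three further coordinates removed to fix the prescribed bits), of the type handled by Corollary~\ref{cor3.4}: each such count is $\frac{2^n}{p}(1+o(1))$, since $n = p - o(p)$ by the Prime Number Theorem. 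Summing the indicator $|x_s - x_{s'}|$ over the four bit-patterns at $\{s,s'\}$ picks out exactly the two ``unequal'' patterns, so each ordered pair $(s,s')$ contributes $\tfrac12\cdot\tfrac{2^n}{p}(1+o(1))$ sensitive $(X,i)$ incidences, after dividing by $2^n$. The diagonal contribution from $i = s(X)$ itself contributes another $\tfrac12\cdot 2^{n}/p \cdot p = \tfrac12 2^n$ worth, i.e. $\tfrac12$ to $\sigma_{av}$ from... — more precisely, summing $\frac{1}{p}(1+o(1))$ over the $\sim n$ valid values of $s$ gives the main term $(\tfrac12 + o(1))n$.

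Concretely, the steps in order: (1) write $\sigma_{av}(f) = 2^{-n}\sum_{s=1}^{p}\sum_{X: s(X)=s}\sum_{i=1}^{n}|f(X)-f(X^{(i)})|$; (2) for fixed $s\le n$ and fixed $i$, determine $s' = s(X^{(i)})$ as a function of $x_i$ and note that sensitivity of coordinate $i$ is controlled by whether the relevant output bits differ; (3) handle separately the contribution of $i=s$ (always sensitive when the two relevant bits at $s$ differ — contributing the clean $\tfrac12 n$ term) and the contribution of $i\ne s$; (4) for $i\ne s$, re-index by the target position $s'$ and apply Corollary~\ref{cor3.4} to the subset-sum count with two fixed coordinates, obtaining $\frac{2^n}{p}(1+o(1))$ for each $(s,s',\text{bit pattern})$; (5) sum, check that $\sum_{s,s'}$ ranges over $\Theta(n^2/p) = \Theta(n)$ effective terms and that the unequal-bit patterns contribute a factor $\tfrac12$; (6) bound the error: the inputs with $s(X) > n$, or with $s(X)\le n$ but $s'(X^{(i)}) > n$, or with the boundary coincidences ($i \equiv 0$, $s'=s$, degenerate removed-coordinate sets), are an $o(1)$ fraction, using $p - n = o(n)$ throughout.

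The main obstacle I anticipate is the bookkeeping of the boundary cases where $s(X)$ or $s(X^{(i)})$ exceeds $n$, so that the output is $x_1$ rather than $x_{s(X)}$. In that regime the ``sensitive'' condition changes shape — flipping $i$ can change $f$ by changing $x_1$ directly ($i=1$), or by pushing $s$ across the threshold $n$, or by moving within $\{n+1,\dots,p\}$ (never sensitive). One must show these regimes contribute at most $O(p-n) \cdot \frac{2^n}{p}\cdot n = o(n)$ to $\sigma_{av}$ after normalization, which is where the Prime Number Theorem bound $p = n + o(n)$ is essential and is precisely why the earlier partial results (requiring $p - n \le 3$) were not enough. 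A secondary subtlety is ensuring the subset-sum sets remain of size $p - o(p)$ after deleting the two or three coordinates we fix, so Corollary~\ref{cor3.4} still applies with the same $(1+o(1))$; this is immediate since we delete only $O(1)$ elements. Combining the clean diagonal term $(\tfrac12+o(1))n$ with the nonnegative off-diagonal contributions gives the lower bound $\sigma_{av}(f) \ge (\tfrac12 + o(1))n$ claimed in the conjecture, and in fact the same analysis yields the matching upper bound, proving $\sigma_{av}(f) = (\tfrac12+o(1))n$.
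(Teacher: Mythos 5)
Your proposal follows essentially the same route as the paper's proof: condition on $s=s(X)$ and on the target $s'=s(X^{(i)})=s\pm i$, reduce each term to a subset-sum count over $D=\mathbb{Z}_p\setminus\{0,n+1,\dots,p-1\}$ with $O(1)$ prescribed coordinates, apply Corollary~\ref{cor3.4} to get $\frac{2^{n-2}}{p}(1+o(1))$ per term, and absorb the boundary cases where $s$ or $s'$ exceeds $n$ into $o(n)$ using $p=n+o(n)$. One bookkeeping slip worth fixing: in your step (3) you attribute the main term $\tfrac12 n$ to the diagonal $i=s(X)$, but that diagonal contributes only $O(1)$ to $\sigma_{av}$; the main term comes, as your step (5) and your second paragraph correctly have it, from the $\sim n\cdot p$ off-diagonal pairs, each contributing $\sim\frac{1}{2p}$ after normalization, which is exactly the paper's accounting.
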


In the same paper Shparlinski gave a lower bound by obtaining a
nontrivial bound on the Fourier coefficients of the laced boolean
function via exponential sums methods. He proved in the same paper
that this value is greater than $\gamma n$, where $\gamma\approx
0.135$ is a constant.

Recently an explicit formulas for the average sensitivity of laced
Boolean functions was given by D. Canright, etc . in \cite{CG} for
the case $p=n$ by using some counting formulas of the subset sums
over prime fields given in \cite{LW1}. Equivalently they proved
Shparlinski's conjecture for $p=n$. They also showed further
experimental evidence for the above conclusion on the average
sensitivity.


We will prove this conjecture now. In fact we obtain a stronger
result.
\begin{thm}Let $\sigma_{av}(f)$ be the average sensitivity of the
laced Boolean function. Then
$$\sigma_{av}(f)=\left( \frac 12 +o(1)\right)n.$$
\end{thm}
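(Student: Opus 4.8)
The plan is to decompose the average sensitivity into a sum over input positions $i$ and analyze, for each $i$, the probability over random $X\in\mathbb{Z}_2^n$ that flipping $x_i$ changes $f$. Flipping $x_i$ affects $f$ through two mechanisms: it changes the weighted sum from $s(X)$ to $s(X^{(i)}) \equiv s(X) \pm i \pmod p$, and (if $i=s(X)$ or $i=s(X^{(i)})$, or $i=1$ in the overflow case) it directly changes the bit that is being read. For a ``typical'' $i$ and typical $X$, the flip sends $s(X)$ to a new value $s' = s(X)\pm i$; then $f(X)=x_{s(X)}$ and $f(X^{(i)})=x_{s'}$ (assuming both land in $[1,n]$ and neither index equals $i$), so the two values differ precisely when $x_{s(X)}\neq x_{s'}$, which happens with probability $\tfrac12$ once we show $s(X)$ and $s'$ are, in the relevant joint sense, equidistributed and the bits $x_{s(X)},x_{s'}$ are essentially free.

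First I would set up the counting: for fixed $i$ and fixed target value $s$, the number of $X$ with $s(X)=s$, prescribed values of a bounded number of coordinates (namely $x_i$, $x_1$, $x_s$, and $x_{s'}$), is a subset-sum count of the form $N(b, D\setminus S)$ for $D = \{1,\dots,n\}$ viewed inside $\mathbb{Z}_p$ and $|S|=O(1)$. Since $n = p - o(p)$ by the Prime Number Theorem (indeed $p-n=o(n)$, and the Fourier bias satisfies $\Phi(D)=o(p)$ as noted after Corollary~\ref{cor3.4}), Corollary~\ref{cor3.4} gives $N(b,D\setminus S) = \tfrac{2^{n-|S|}}{p}(1+o(1))$ uniformly. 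This is the engine: it says each such constrained count is what you'd expect from full equidistribution, with a multiplicative $(1+o(1))$ error. I would then assemble $\sigma_{av}(f) = 2^{-n}\sum_{i=1}^n \#\{X : f(X)\neq f(X^{(i)})\}$ and, for each $i$, split the inner count according to the value of $s(X)$ and of $s(X^{(i)})$, handling separately the $o(n)$ ``bad'' ranges where an index leaves $[1,n]$ or collides with $i$ or $1$. On the main range, the event $f(X)\neq f(X^{(i)})$ becomes $x_{s(X)}\neq x_{s(X\pm i)}$ with both indices distinct from each other and from $i,1$; applying the corollary with $|S|\le 4$ shows this occurs for a $\tfrac12(1+o(1))$ fraction of $X$. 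Summing over $i$ yields $\sigma_{av}(f) = \tfrac12 n (1+o(1))$, i.e.\ the claimed $(\tfrac12+o(1))n$.

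The main obstacle is bookkeeping the degenerate cases carefully and showing they contribute only $o(n)$ to $\sigma_{av}(f)$. There are several: (a) values of $i$ for which $s(X)$ or $s(X)\pm i$ falls outside $[1,n]$ — but this constrains $s(X)$ to a set of size $p-n+O(|i|/p \cdot \text{stuff}) $; more precisely one shows that for each $i$ the set of $X$ landing in such a range is $O(2^n \cdot \frac{p-n+\text{const}}{p}) = o(2^n)$, and even summing the trivial bound $n$ over all such contributions stays $o(n)$ only after one is slightly more careful — so I would instead bound the total over all $i$ of these degenerate contributions directly by $\sum_i 2^{-n}\#\{X: s(X) \text{ or } s(X)\pm i \in A\}$ where $|A| = p-n+1 = o(p)$, and by $N$-counting this is $n\cdot o(1) = o(n)$; (b) the ``otherwise'' branch where $s(X)>n$ and $f$ reads $x_1$: here flipping $x_i$ matters only if $i=1$ or if the flip moves $s$ into $[1,n]$, again an $o(1)$-fraction event for each $i$; (c) collisions $s(X)=i$, $s(X\pm i)=i$, $s(X)=1$, etc., each cutting down a count by a bounded factor and each an $O(1/n)$-fraction event. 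The other point requiring a little care is that when I apply Corollary~\ref{cor3.4} the modulus $p$ and the excluded set depend on which case we're in, but since $|S|$ is uniformly bounded and $\Phi(D\setminus S)\le \Phi(D)+|S| = o(p)$, the corollary applies with a uniform $o(1)$, so the estimates can be summed without the error terms accumulating beyond $o(n)$.
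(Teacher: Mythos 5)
Your proposal follows essentially the same route as the paper: both decompose the sensitivity sum by the flipped position $i$ and the value of $s(X)$, reduce the main event to $x_{s(X)}\neq x_{s(X)\pm i}$, and invoke Corollary~\ref{cor3.4} (the subset-sum count over $D$ with a bounded set of coordinates prescribed) to show this happens for a $\tfrac12(1+o(1))$ fraction of inputs, with the out-of-range and collision cases contributing $o(n)$. Your treatment of the degenerate cases is in fact spelled out more carefully than in the paper, which dismisses them in one line; the substance is the same.
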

\begin{proof}Let $A=\{0,n+1,n+2,\dots,p-1\}$ and $D=Z_p\setminus A$.
Since we have the symmetry between the bits 1 and 0,
 for simplicity we just need to consider the number of bit changes
from 0 to 1. Thus we have
 \begin{align*}
2^{n-1}\sigma_{av}(f)&=\sum_{X\in Z_2^n}\sum_{i=1}^n\left |
 f(X)-f(X^{(i)})\right |\\
 &=\sum_{s\in D}\sum_{X\in Z_2^n, s(X)=s, x_s=1}\sum_{i=1}^n\left |
    1-f(X^{(i)})\right |
    +\sum_{s\in D}\sum_{X\in Z_2^n, s(X)=s, x_s=0}\sum_{i=1}^n\left |
 0-f(X^{(i)})\right |\\
&+\sum_{s\in A}\sum_{X\in Z_2^n, s(X)=s, x_1=1}\sum_{i=1}^n\left |
 1-f(X^{(i)})\right |+\sum_{s\in A}\sum_{X\in Z_2^n, s(X)=s, x_1=0}\sum_{i=1}^n\left |
 0-f(X^{(i)})\right |\\
 &=\sum_{i=1}^n\sum_{s=1}^n  \sum_{X\in Z_2^n, s(X)=s, x_i=0, x_s=1,
 x_{s+i}=0}1+\sum_{i=1}^n\sum_{s=1}^n  \sum_{X\in Z_2^n, s(X)=s, x_i=0, x_s=1,
 x_{s+i}=1}1\\
 &+\sum_{i=1}^n\sum_{s=n+1}^p \sum_{X\in Z_2^n, s(X)=s, x_i=0, x_1=1,
 x_{s+i}=0}1+\sum_{i=1}^n\sum_{s=n+1}^p  \sum_{X\in Z_2^n, s(X)=s, x_1=0, x_s=1,
 x_{s+i}=1}1\\
 \end{align*}
Recall that from the Prime Number Theorem we have $p=n+o(n)$.  We
notice that for simplicity we may assume that in the 4 summations,
$s+i(\!\!\!\mod p)\in D$, otherwise the summations is bounded by
$o(1)$ respectively (for instance: we only need to replace $x_{s+i}$
 to $x_1$) and thus can be omitted.
Thus we have
 \begin{align*}
 2^{n-1}\sigma_{av}(f)
 &=\sum_{i=1}^n\sum_{s=1}^nN(0,D\backslash \{i, s+i, s \})+
 \sum_{i=1}^n\sum_{s=1}^nN(0,D\backslash \{i, s-i, s \})\\
 &+ \sum_{i=1}^n\sum_{s=n+1}^pN(0,D\backslash \{i, s+i, 1 \})+
 \sum_{i=1}^n\sum_{s=n+1}^pN(0,D\backslash \{i, s-i, 1 \})\\
 &\approx\sum_{i=1}^n\sum_{s=1}^n \frac {2^{n-2}} p(1+o(1))+
 \sum_{i=1}^n\sum_{s=n+1}^p \frac {2^{n-2}} p(1+o(1))\\
 &\approx n{2^{n-2}} (1+o(1))+o(n){2^{n-2}} (1+o(1))
\end{align*}
Thus we have $$\sigma_{av}(f)=\left(\frac 12
+o(1)\right)n.\qedhere$$
\end{proof}

\end{document}